\documentclass[11pt]{article}

\usepackage[margin=1in]{geometry}
\usepackage{amsmath, amssymb, amsthm, mathtools}

\usepackage{aliascnt}

\newtheorem{theorem}{Theorem}[section]

\newaliascnt{lemma}{theorem}
\newtheorem{lemma}[lemma]{Lemma}
\aliascntresetthe{lemma}

\newtheorem{proposition}[theorem]{Proposition}

\newtheorem{claim}[theorem]{Claim}

\newtheorem{definition}[theorem]{Definition}

\newtheorem*{theorem*}{Theorem}

\newtheorem*{corollary*}{Corollary}
\newtheorem*{conjecture*}{Conjecture}
\newtheorem*{lemma*}{Lemma}
\newtheorem*{thm*}{Theorem}
\newtheorem*{prop*}{Proposition}
\newtheorem*{obs*}{Observation}
\newtheorem*{definition*}{Definition}

\newtheorem*{remark*}{Remark}
\newtheorem*{rec*}{Recommendation}

\usepackage{enumitem}
\usepackage{graphicx}
\usepackage{fancybox}
\usepackage{comment}
\usepackage{xcolor}
\usepackage{nameref}
\usepackage{bbm}

\definecolor{ForestGreen}{rgb}{0.1333,0.5451,0.1333}
\definecolor{DarkRed}{rgb}{0.8,0,0}
\definecolor{Red}{rgb}{1,0,0}
\usepackage[linktocpage=true,
pagebackref=true,colorlinks,
linkcolor=ForestGreen,citecolor=ForestGreen,
bookmarks,bookmarksopen,bookmarksnumbered]
{hyperref}

\usepackage{amsmath}
\usepackage{amssymb}
\usepackage{amsthm}
\usepackage{bm}

\usepackage{algorithm}
\usepackage{algpseudocode}

\makeatletter
\renewcommand{\theHALG@line}{\thealgorithm.\arabic{ALG@line}}
\makeatother

\usepackage{subfig}

\usepackage{thm-restate}

\usepackage{float}
\usepackage{cleveref}

\renewcommand{\deg}{\textnormal{deg}}

\newcommand\R{\mathbb{R}}

\newcommand{\E}[1]{\mathop{{}\mathbb{E}}\left[#1\right]}

\renewcommand{\epsilon}{\ensuremath\varepsilon}

\renewcommand{\phi}{\ensuremath{\varphi}}

\providecommand{\Prb}{\mathbb{P}}

\title{VAC: A \underline{V}olume-sampling-based Elimination Rule for \underline{A}pproximate \underline{C}holesky Factorization}
\author{Yves Baumann, Rasmus Kyng\thanks{The research leading to these results has received funding from the starting grant “A New Paradigm for Flow and Cut Algorithms” (no. TMSGI2 218022) and grant no. 200021 204787 of the Swiss National Science Foundation.} , Gernot Zöcklein\footnotemark[1]}
\date{\today}

\begin{document}
\maketitle

\begin{abstract}
We propose Volume Appproximate Cholesky (VAC), an alternative sampling rule for practical approximate Cholesky algorithms. Our rule samples a uniformly random spanning tree of the arising product clique to reduce the fill-in generated at each step. Sampling a random spanning tree preserves the edgewise marginals of the provably correct scheme of Kyng-Sachdeva (FOCS`16), while ensuring connectivity in the spirit of the practical rule proposed in Gao-Kyng-Spielman (SISC`26). Our sampling method is simple, provably linear time and also admits a $O(\log n)$ depth parallel implementation.
\end{abstract}

\section{Preliminaries}

The main object of study will be \emph{product cliques}. These are the family of graphs that arise as a fill-in when implementing a cholesky factorization of the Laplacian $L$ of a graph. A line of work has shown both theoretical guarantees as well as practical performance for \emph{Approximate Cholesky} algorithms, where the dense fill-in cliques are replaced by sparse sub samples \cite{KS16, GKS23}. In this note, we show how to sample a uniformly random spanning tree of a product clique.

To start, let $V=\{1,\ldots,n\}$ with $n\ge 2$, and let
$w:V\to\R_{>0}$ be a positive vertex-weight function.  Write
\[
    W:=\sum_{x\in V}w(x).
\]

\begin{definition}[Product clique]
The \emph{product clique} $K_w$ is the complete graph on $V$ with edge
conductances
\[
    c_{uv}:=w(u)w(v),
    \qquad u\ne v.
\]
\end{definition}

For an edge $e=(u,v)$, let
\[
    b_e:=e_u-e_v.
\]
If $L$ is the Laplacian of $K_w$, the leverage score of edge $e = (u, v)$ is defined as
\[
    \tau_{uv}:=c_{uv} \cdot b_e^{\top}L^+b_e.
\]
Note that we can write the Laplacian as
\[
    L=W\operatorname{diag}(w)-ww^{\mathsf T}.
\]

We will need the standard Matrix Determinant lemma. See for example \cite{horn} for a proof.
\begin{lemma}[Matrix Determinant Lemma]
    Let $A \in \R^{n \times n}$ be an invertible matrix, and let $u, v \in \R^n$. Then 
    $$
    \det(A + uv^{\top}) = (1 + u^{\top} A^{-1} v) \det(A).
    $$
\end{lemma}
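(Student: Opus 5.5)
The Matrix Determinant Lemma is stated for any invertible $A$ and vectors $u,v$, so I will prove it in full generality. The plan is to reduce to the case $A=I$ by factoring, and then to compute the determinant of a rank-one perturbation of the identity with a block-matrix argument.

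\textbf{Reduction to $A=I$.} Since $A$ is invertible, I would write
\[
A+uv^{\top}=A\,(I+A^{-1}uv^{\top}).
\]
Multiplicativity of the determinant then gives
\[
\det(A+uv^{\top})=\det(A)\det(I+A^{-1}uv^{\top}).
\]
So it suffices to show that for any $x,y\in\R^n$ we have $\det(I+xy^{\top})=1+y^{\top}x$. I will apply this with $x=A^{-1}u$ and $y=v$, which yields $1+v^{\top}A^{-1}u$. This scalar is its own transpose, so it equals $1+u^{\top}A^{-\top}v$, not obviously $1+u^{\top}A^{-1}v$.

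\textbf{The identity case.} Consider the $(n+1)\times(n+1)$ block matrix
\[
M=\begin{pmatrix} I & -x\\ y^{\top} & 1\end{pmatrix}.
\]
I will compute $\det M$ by two block eliminations.
\begin{itemize}
\item Eliminating the bottom-left block with the top-left $I$ gives $\det M=\det(I)\cdot(1+y^{\top}x)=1+y^{\top}x$.
\item Eliminating the top-right block with the bottom-right scalar $1$ gives $\det M=1\cdot\det(I+xy^{\top})$.
\end{itemize}
Comparing the two gives $\det(I+xy^{\top})=1+y^{\top}x$. Each elimination is just the Schur complement formula, realised by multiplying $M$ by a block-unitriangular matrix, which has determinant $1$.

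\textbf{The expected obstacle.} The one step needing care is matching the scalar to the stated form $1+u^{\top}A^{-1}v$ rather than $1+v^{\top}A^{-1}u$; the two agree only when $A$ is symmetric. If the intended reading requires the non-symmetric form, I would instead factor on the right:
\[
A+uv^{\top}=(I+uv^{\top}A^{-1})\,A.
\]
The identity case with $x=u$ and $y^{\top}=v^{\top}A^{-1}$ then gives $1+v^{\top}A^{-1}u$ again.

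So in general the correct scalar is $1+v^{\top}A^{-1}u$. It coincides with $1+u^{\top}A^{-1}v$ whenever $A$ is symmetric, which is the case in the paper's application, since $L$ and its relatives such as $W\operatorname{diag}(w)$ are symmetric. I would note this in the proof and otherwise present the argument above.
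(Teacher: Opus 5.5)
The paper gives no proof of this lemma; it only cites Horn--Johnson. Your argument therefore stands on its own, and it is correct. The factorization $A+uv^{\top}=A(I+A^{-1}uv^{\top})$ reduces everything to $\det(I+xy^{\top})=1+y^{\top}x$. Your two Schur-complement evaluations of the bordered matrix $M$ prove that identity cleanly, since each elimination is multiplication by a block-unitriangular matrix that leaves a block-triangular one. These are exactly the two moves the paper makes when it applies the lemma in Lemma~\ref{lem:st_partition}. The step that pulls out $\det(WD_{-r})$ is your left factorization with $A=WD_{-r}$. The next step, which collapses an $(n-1)\times(n-1)$ determinant to a scalar, is precisely your identity case with $x=-(WD_{-r})^{-1}w_{-r}$ and $y=w_{-r}$. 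So the intermediate identity you isolate is what the paper actually relies on.

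Your worry about the scalar is justified, and you can state it more firmly than ``not obviously'': the lemma as printed is false for non-symmetric $A$. Take $A=\begin{pmatrix}1&1\\0&1\end{pmatrix}$, $u=e_1$, $v=e_2$. Then $A^{-1}=\begin{pmatrix}1&-1\\0&1\end{pmatrix}$, so $(1+u^{\top}A^{-1}v)\det A=0$. However, $\det(A+uv^{\top})=\det\begin{pmatrix}1&2\\0&1\end{pmatrix}=1=(1+v^{\top}A^{-1}u)\det A$. Including such an example turns your remark into a justified correction.

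The right identity is the one you derive, with $1+v^{\top}A^{-1}u$. It agrees with the printed form for all $u,v$ exactly when $A$ is symmetric, and for particular pairs such as parallel $u,v$ regardless of symmetry. So ``only when $A$ is symmetric'' slightly overstates it. The paper's only application has both features: $A=WD_{-r}$ is diagonal and $u=-v=-w_{-r}$. Nothing downstream is affected.
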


\section{The weighted spanning-tree distribution}

For a spanning tree $T$ of a graph with edge conductance $c \in \R^E$, write
\[
    c(T):=\prod_{e\in T}c_e.
\]
Let $\mathcal{T}$ be a random spanning tree. The weighted uniform spanning-tree distribution is
\[
    \Prb[\mathcal T=T]
    =\frac{c(T)}{\sum_{T'}c(T')}.
\]
In the product clique,
\[
    c(T)
    =\prod_{\{u,v\}\in T}w(u)w(v)
    =\prod_{x\in V}w(x)^{\deg_T(x)}.
\]

\begin{lemma}\label{lem:st_partition}
The weighted spanning-tree partition function of $K_w$ is
\[
    Z_w:=\sum_T c(T)
    =W^{n-2}\prod_{x\in V}w(x).
\]
Consequently, every fixed spanning tree $T$ satisfies
\[
    \Prb[\mathcal T=T]
    =\frac{1}{W^{n-2}}
      \prod_{x\in V}w(x)^{\deg_T(x)-1}.
\]
\end{lemma}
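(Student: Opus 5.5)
We will compute $Z_w$ with the weighted Matrix-Tree theorem and then evaluate the resulting determinant with the Matrix Determinant Lemma stated above. By the Matrix-Tree theorem, $Z_w$ equals any principal cofactor of $L$. Concretely, deleting row and column $n$ gives
\[
Z_w=\det\bigl(L_{[n-1]}\bigr),\qquad L_{[n-1]}=W D-\tilde w\tilde w^{\top},
\]
where $D=\operatorname{diag}(w(1),\dots,w(n-1))$ and $\tilde w=(w(1),\dots,w(n-1))^{\top}$. This form follows from the expression $L=W\operatorname{diag}(w)-ww^{\mathsf T}$ recorded in the preliminaries.

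Next we apply the Matrix Determinant Lemma with $A=WD$, which is invertible because all weights are positive, and with $u=-\tilde w$, $v=\tilde w$. We have $\det(A)=W^{n-1}\prod_{x\le n-1}w(x)$. The correction factor is
\[
1-\tilde w^{\top}(WD)^{-1}\tilde w=1-\frac{1}{W}\sum_{x\le n-1}w(x)=\frac{w(n)}{W}.
\]
Multiplying the two gives $Z_w=W^{n-2}\prod_{x\in V}w(x)$, which is the first claim.

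For the consequence, note that $\sum_x \deg_T(x)=2(n-1)$. Hence
\[
c(T)=\prod_x w(x)^{\deg_T(x)}.
\]
Dividing by $Z_w$ cancels one factor of each $w(x)$ and leaves
\[
\Prb[\mathcal T=T]=W^{-(n-2)}\prod_x w(x)^{\deg_T(x)-1}.
\]

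The only step that needs care is invoking the weighted Matrix-Tree theorem, which the paper has not stated. We will cite it as standard. As an alternative that avoids it, we could expand via Prüfer codes: a tree with degree sequence $(d_x)$ corresponds to $\binom{n-2}{d_1-1,\dots,d_n-1}$ codes, and the multinomial theorem then gives $\sum_T\prod_x w(x)^{d_x-1}=W^{n-2}$ directly. Either route suffices, and the determinant computation itself is routine.
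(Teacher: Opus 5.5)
Your proof is correct and is essentially the paper's argument: you apply the Matrix-Tree theorem to a principal minor of $L=W\operatorname{diag}(w)-ww^{\mathsf T}$ and evaluate it with the Matrix Determinant Lemma for $A=WD$, obtaining the correction factor $w(r)/W$ (you fix $r=n$, while the paper leaves the deleted vertex $r$ arbitrary). One small nit: the identity $c(T)=\prod_x w(x)^{\deg_T(x)}$ follows from each edge $\{u,v\}$ contributing $w(u)w(v)$, not from $\sum_x\deg_T(x)=2(n-1)$, so the ``hence'' there is a harmless non sequitur.
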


\begin{proof}
Fix a root $r\in V$.  By the matrix-tree theorem (see, e.g., section $13$ in \cite{SAGT}), $Z_w$ is the determinant of
the principal minor obtained from $L$ by deleting row and column $r$.
Writing $D_{-r}:=\operatorname{diag}(w(x):x\ne r)$ and
$w_{-r}:=(w(x))_{x\ne r}$, the matrix determinant lemma gives
\begin{align*}
    Z_w
    &=\det\bigl(WD_{-r}-w_{-r}w_{-r}^{\mathsf T}\bigr)\\
    &= \det\bigl(WD_{-r} I - WD_{-r} (WD_{-r})^{-1}w_{-r}w_{-r}^{\mathsf T}\bigr)\\
    &= \det(WD_{-r})
      \det \left(I-(WD_{-r})^{-1}w_{-r}w_{-r}^{\mathsf T}\right) \\
    &=\det(WD_{-r})
      \det\left(I-w_{-r}^{\mathsf T}(WD_{-r})^{-1}w_{-r}\right)\\
    &=W^{n-1}\prod_{x\ne r}w(x)
      \left(1-\frac{W-w(r)}{W}\right)\\
    &=W^{n-2}\prod_{x\in V}w(x).
\end{align*}
Dividing $c(T)=\prod_x w(x)^{\deg_T(x)}$ by $Z_w$ gives the claimed law.
\end{proof}

\paragraph{Leverage scores from spanning-tree marginals}

We start with the standard identity
\[
    \Prb[e\in\mathcal T]=\tau_e
\]
in order to relate weighted spanning-tree marginals to leverage scores. See for example \cite{SAGT} for a proof of this identity.

\begin{proposition}
For every distinct $u,v\in V$,
\[
    \tau_{uv}=\frac{w(u)+w(v)}{W}.
\]
\end{proposition}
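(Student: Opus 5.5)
The plan is to compute $\tau_{uv}$ directly from the definition $\tau_{uv}=c_{uv}\,b_e^{\top}L^+b_e$ by finding an explicit vector $x$ with $Lx=b_e$. Using $L=W\operatorname{diag}(w)-ww^{\mathsf T}$, I try the ansatz $x=\frac{1}{W}\operatorname{diag}(w)^{-1}b_e$, that is, $x=\frac{1}{W}\bigl(e_u/w(u)-e_v/w(v)\bigr)$. Then
\[
Lx = b_e - \frac{1}{W}\,w\,\bigl(w^{\mathsf T}\operatorname{diag}(w)^{-1}b_e\bigr) = b_e - \frac{1}{W}\,w\,(\mathbf 1^{\mathsf T}b_e)=b_e,
\]
since $\mathbf 1^{\mathsf T}b_e=0$. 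So $x$ solves $Lx=b_e$.

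Next I need to pass from this solution to $b_e^{\top}L^+b_e$. The product clique is connected, because every conductance is positive. Hence $\ker L=\operatorname{span}(\mathbf 1)$, and $b_e\perp\mathbf 1$ lies in the range of $L$. Every solution of $Lx=b_e$ therefore has the form $L^+b_e+t\mathbf 1$ for some scalar $t$. Because $b_e^{\top}\mathbf 1=0$, this gives $b_e^{\top}L^+b_e=b_e^{\top}x$ for any such solution. This identification is the only step requiring care, and it is handled by the kernel argument just given.

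Finally, I compute
\[
b_e^{\top}x=\frac{1}{W}\Bigl(\frac{1}{w(u)}+\frac{1}{w(v)}\Bigr)=\frac{w(u)+w(v)}{W\,w(u)w(v)}.
\]
Multiplying by $c_{uv}=w(u)w(v)$ gives $\tau_{uv}=\frac{w(u)+w(v)}{W}$, as claimed.

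As a sanity check, the scores should sum to $n-1$. Indeed, $\sum_{u<v}(w(u)+w(v))=(n-1)W$, which agrees with $\sum_e \Prb[e\in\mathcal T]=n-1$.
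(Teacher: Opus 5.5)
Your proof is correct, but it takes a different route from the paper. The paper never touches $L^+$. It contracts $e$ and observes that $K_w/e$ is again a product clique, with merged weight $w(u)+w(v)$ and the same total weight $W$. It then applies Lemma~\ref{lem:st_partition} to get the weight of spanning trees through $e$, hence $\Prb[e\in\mathcal T]=\frac{w(u)+w(v)}{W}$. Only at the end does it convert this to $\tau_{uv}$ via the marginal identity $\Prb[e\in\mathcal T]=\tau_e$.

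You instead work straight from the definition of $\tau_{uv}$. The key fact is $w^{\mathsf T}\operatorname{diag}(w)^{-1}=\mathbf 1^{\mathsf T}$, which gives $L\cdot\frac{1}{W}\operatorname{diag}(w)^{-1}b=b$ for every $b\perp\mathbf 1$. Your kernel argument correctly justifies replacing $L^+b_e$ by any solution of $Lx=b_e$.

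Your approach is more elementary and self-contained. It needs neither the matrix-tree theorem nor the spanning-tree marginal identity. It also yields in one stroke the effective resistances $\frac{1}{W}\bigl(\frac{1}{w(u)}+\frac{1}{w(v)}\bigr)$, and indeed $L^+b$ up to a multiple of $\mathbf 1$ for every $b\perp\mathbf 1$.

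The paper's approach instead produces the spanning-tree marginal $\Prb[e\in\mathcal T]$ directly. That marginal is what actually justifies the reweighting $W\frac{w(u)w(v)}{w(u)+w(v)}=c_{uv}/\Prb[e\in\mathcal T]$ in Algorithm~\ref{alg:uniform-spanning-tree} and hence unbiasedness. If you start from your proof, you would still need the marginal identity to reach that conclusion.
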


\begin{proof}
Fix $e=\{u,v\}$.  Contracting $e$ merges $u$ and $v$ into a vertex of weight
$w(u)+w(v)$.  Indeed, for every $x\notin\{u,v\}$, the total conductance from
the contracted vertex to $x$ is
\[
    w(u)w(x)+w(v)w(x)
    =(w(u)+w(v))w(x).
\]
Thus the contracted graph is again a product clique, and we can use Lemma \ref{lem:st_partition} to derive the total weight of spanning trees containing $e$ as
\[
    c_{uv}\,
    W^{n-3}(w(u)+w(v))
    \prod_{x\ne u,v}w(x).
\]
Dividing by the full partition function
\[
    Z_w=W^{n-2}w(u)w(v)\prod_{x\ne u,v}w(x)
\]
gives
\[
    \Prb[e\in\mathcal T]
    =\frac{w(u)+w(v)}{W}.
\]
The spanning-tree marginal identity now yields the formula for $\tau_{uv}$.
\end{proof}

\section{Sampling a Uniform Spanning Tree}
We will use Pr\"ufer codes as a sampling device. See \cite{gross2018graph} for details on the relationship between Prüfer codes and labeled spanning trees of $V$. Every labeled tree on
$V$ corresponds bijectively to a Pr\"ufer sequence
\[
    P=(P_1,\ldots,P_{n-2})\in V^{n-2},
\]
and if $T$ is the tree corresponding to sequence $P$, then
\[
    \deg_T(x)-1
    =|\{t\in\{1,2,\ldots,n-2\}:P_t=x\}|.
\]
Therefore, if the entries of $P$ are sampled independently according to
\[
    \Prb[P_t=x]=\frac{w(x)}{W},
\]
then
\[
    \Prb[P=P(T)]
    =\frac{1}{W^{n-2}}
      \prod_{x\in V}w(x)^{\deg_T(x)-1},
\]
which is exactly the spanning-tree law derived above. This observation leads to the following simple algorithm for sampling a uniform spanning tree from product cliques. Below, $\textsc{DecodePrüfer}$ is any algorithm that decodes a Prüfer sequence into the tree it encodes.

\begin{algorithm}[H]
\caption{Weighted uniform spanning tree of a product clique}
\label{alg:uniform-spanning-tree}
\begin{algorithmic}[1]
\Require $V=\{1,\ldots,n\}$ and positive weights $w(v)$
\State $W\gets\sum_{v\in V}w(v)$
\For{$t=1,\ldots,n-2$}
    \State Independently sample $P_t\in V$ with probability $w(P_t)/W$
\EndFor
\State $T\gets\Call{DecodePrüfer}{P_1,\ldots,P_{n-2}}$
\State For every $e = (u, v) \in T$, assign it conductance $W \frac{w(u) w(v)}{w(u) + w(v)}$
\State \Return $T$
\end{algorithmic}
\end{algorithm}

\begin{lemma}\label{lem:product-clique-sampling}
For the product clique $K_w$, the re-weighted uniform spanning tree $T$ of \Cref{alg:uniform-spanning-tree} satisfies that
$$
\E{L_T} = L.
$$
Moreover, it can be sampled sequentially with $O(n)$ work.  Alternatively, it can be sampled with
$O(n\log n)$ work and $O(\log n)$ parallel depth.
\end{lemma}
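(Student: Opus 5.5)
The plan has two parts, matching the two claims: unbiasedness via linearity of expectation, and the complexity bounds via a concrete Prüfer decoder.

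\textbf{Unbiasedness.} Write the output as
\[
L_T=\sum_{e=(u,v)\in T}\tilde c_{uv}\,b_eb_e^{\top},
\qquad \tilde c_{uv}=W\frac{w(u)w(v)}{w(u)+w(v)} .
\]
Summing over all pairs with indicator variables and taking expectations gives
\[
\E{L_T}=\sum_{u\ne v}\Prb[\{u,v\}\in T]\,\tilde c_{uv}\,b_eb_e^{\top}.
\]
By the discussion preceding \Cref{alg:uniform-spanning-tree}, the Prüfer bijection together with \Cref{lem:st_partition} shows that the decoded tree $T$ has the weighted spanning-tree law on $K_w$. The previous proposition then gives $\Prb[\{u,v\}\in T]=\tau_{uv}=(w(u)+w(v))/W$. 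Substituting,
\[
\Prb[\{u,v\}\in T]\,\tilde c_{uv}=w(u)w(v)=c_{uv},
\]
so $\E{L_T}=\sum_{u\ne v}c_{uv}b_eb_e^{\top}=L$. This part is routine.

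\textbf{Sequential $O(n)$ work.} There are two steps.
\begin{itemize}
\item \emph{Sampling.} Drawing the $n-2$ i.i.d.\ entries $P_t$ from the distribution $w/W$ takes $O(n)$ total using Walker's alias method: $O(n)$ preprocessing, then $O(1)$ per sample. Alternatively, sort $n-2$ uniforms by bucketing and merge them against the prefix sums of $w$; this is expected linear.
\item \emph{Decoding.} Use the classical linear-time Prüfer decoder. First compute all degrees $\deg(x)=1+\#\{t:P_t=x\}$. Then scan a pointer over vertices in increasing label order to find the smallest leaf. Whenever removing a leaf makes $P_t$ a leaf with label smaller than the pointer, attach it immediately instead of advancing. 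Each vertex is touched $O(1)$ times amortized.
\end{itemize}
Assigning the conductances costs $O(1)$ per edge.

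\textbf{Parallel $O(n\log n)$ work, $O(\log n)$ depth.}
\begin{itemize}
\item \emph{Sampling.} Compute prefix sums of $w$ by a parallel scan. Each $P_t$ is then found independently by binary search on a uniform variable, costing $O(\log n)$ work and depth per sample.
\item \emph{Decoding.} Each vertex $x$ is removed exactly at a time $\rho(x)$, and the edge created then is $\{x,P_{\rho(x)}\}$, with the final edge joining the two survivors. Let $\ell(x)$ be the last index $t$ with $P_t=x$ (or $0$ if $x$ never occurs). Vertex $x$ becomes a leaf right after step $\ell(x)$ and is removed at the first step $t>\ell(x)$ where it is the minimum-label leaf.
\end{itemize}
The main obstacle is computing the removal times $\rho$ in $O(\log n)$ depth, since the greedy "smallest leaf" rule is inherently sequential. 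I would attack this with the known characterization (Greenlaw--Petreschi style parallel Prüfer decoding). Sort the pairs $(\ell(x),x)$; then $\rho$ is determined by a merge-like process that can be expressed with prefix-maximum computations over the sequence $\max(\ell(x),\text{label})$ order. Concretely, I would first try to show that the removal order is obtained by repeatedly inserting vertices into a priority structure keyed by label, released at time $\ell(x)$. I would then simulate this with a parallel sort plus a segmented prefix-min scan, each costing $O(n\log n)$ work and $O(\log n)$ depth. If that characterization proves delicate, the fallback is to cite the existing parallel Prüfer decoding algorithms, which achieve exactly these bounds. The remaining steps (edge assembly and reweighting) are embarrassingly parallel.
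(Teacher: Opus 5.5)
Your proposal is correct and follows the paper's proof. Unbiasedness comes from $\Prb[e\in T]=\tau_e=(w(u)+w(v))/W$ cancelling against the reweighting $W\,w(u)w(v)/(w(u)+w(v))$, with the sum taken over unordered pairs. Your alias-method and prefix-sum/binary-search details simply make explicit the paper's ``clearly'' for sampling the Pr\"ufer sequence, and sequential decoding is the same standard linear-time Pr\"ufer decoder.

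For the parallel bound, the paper likewise just cites an existing parallel Pr\"ufer decoder \cite{CAMINITI200797}. So your fallback citation is what actually carries that step. Your tentative priority-queue/segmented prefix-min simulation is unproven as stated, and it is not needed.
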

\begin{proof}
The expectation statement is true because $\Pr[e \in T] = \tau_e$. Also, clearly the Pr\"ufer sequence can be sampled with the stated bounds. Standard
decoding then takes linear work; for parallel decoding, one may use the algorithm of
\cite{CAMINITI200797}.
\end{proof}

\section{Use in Approximate Cholesky Algorithms}
Sampling a weighted clique typically comes up in approximate Cholesky factorization. Consider a Laplacian $S$ of a graph $H$ and a vertex $v$ of $H$. The standard Cholesky factorization of $S$ is given in \cref{alg:gaussian-elimination-cholesky}

\begin{algorithm}[H]
\caption{Cholesky Factorization}
\label{alg:gaussian-elimination-cholesky}
\begin{algorithmic}[1]
\Require Graph Laplacian $L$ from graph $G=(V,E,w)$, $|V|=n$.
\Ensure Lower-triangular matrix $\mathcal{L}$ such that
        $\mathcal{L}\mathcal{L}^{\top}=L$
\State $S_0 \gets L$
\For{$i=1,\ldots,n-1$}
    \State $\displaystyle
        \ell_i
        \gets
        \frac{1}{\sqrt{S_{i-1}(i,i)}}\,
        S_{i-1}(:,i)
    $
    \State $\displaystyle
        S_i
        \gets
        S_{i-1}-\frac{1}{S_{i-1}(i,i)}S_{i-1}(:,i)S_{i-1}(:,i)^{\top}
    $
\EndFor
\State $\ell_n \gets \mathbf{0}_{n\times 1}$
\State $\displaystyle
    \mathcal{L}
    \gets
    \begin{bmatrix}
        \ell_1 & \cdots & \ell_n
    \end{bmatrix}
$
\State \Return $\mathcal{L}$
\end{algorithmic}
\end{algorithm}

For ease of notation, we define $\operatorname{STAR}(v,S)$ to be the Laplacian of the subgraph of $H$ consisting of edges incident on $v$. We define
\[
\operatorname{CLIQUE}(v,S) = \operatorname{STAR}(v,S) - \frac{1}{S(v,v)}\,S(:,v)S(:,v)^{\top}.
\]

For example, suppose
\[
L=
\begin{pmatrix}
W  & -a^{\top} \\
-a & \operatorname{diag}(a) + L_{-1}
\end{pmatrix}.
\]
Then
\[
\operatorname{STAR}(1,L) =
\begin{pmatrix}
W  & -a^{\top} \\
-a & \operatorname{diag}(a)
\end{pmatrix}
\qquad\text{and}\qquad
\operatorname{CLIQUE}(1,L)
=
\begin{pmatrix}
0 & 0 \\
0 & \operatorname{diag}(a) - \dfrac{1}{W}aa^{\top}
\end{pmatrix}.
\]

Now, $\operatorname{CLIQUE}(1,L)$, and specifically $aa^\top$ can be dense. This is the reason why Gaussian Elimination can be slow. At each step, one vertex is eliminated, and the clique is added to the remaining subgraph, which can cost up to $O(\operatorname{deg}(v)^2)$ time. However, the term $aa^\top$ has exactly the \emph{product cliques} structure described in the earlier sections. Previous works have studied how one can sparsify this clique in theory \cite{KS16} as well as in practice \cite{GKS23} to find good factorizations of the Laplacian $S$. We propose \cref{lem:product-clique-sampling} as an alternative clique sampling rule, that can be used to sample the resulting cliques in approximate Cholesky. The result is \cref{alg:approximate-cholesky}
\begin{algorithm}[H]
\caption{Volume Approximate Cholesky(s)}
\label{alg:approximate-cholesky}
\begin{algorithmic}[1]
\Require Graph Laplacian $L$ from graph $G=(V,E,w)$, $|V|=n$.
\Ensure Lower-triangular matrix $\mathcal{L}$.
\State $S_0 \gets L$
\State Generate a random permutation $\pi$ of $[n]$
\For{$i=1,\ldots,n-1$}
    \State $\displaystyle
        \ell_i
        \gets
        \frac{1}
        {\sqrt{S_{i-1}\bigl(\pi(i),\pi(i)\bigr)}}
        S_{i-1}\bigl(:,\pi(i)\bigr)
    $
    \State $\displaystyle
        \begin{aligned}
        S_i \gets{}&
        S_{i-1}
        - \operatorname{STAR}\bigl(\pi(i),S_{i-1}\bigr)+
        \Call{WeightedUSTSampling}
        {\ell_i,\,s}
        \end{aligned}
    $
    \Statex \hspace{\algorithmicindent}
        \Comment{See \cref{alg:uniform-spanning-tree}}
\EndFor
\State $\ell_n \gets \mathbf{0}_{n\times 1}$
\State $\displaystyle
    \mathcal{L}
    \gets
    \begin{bmatrix}
        \ell_1 & \cdots & \ell_n
    \end{bmatrix}
$
\State \Return $\mathcal{L}$ and $\pi$
\end{algorithmic}
\end{algorithm}

From \cref{lem:product-clique-sampling}, we know that \cref{alg:uniform-spanning-tree} samples an unbiased estimator of the product clique. Which leads to a factorization that is in expectation the original Laplacian, as was described in the following claim by \cite{GKS23}:
\begin{claim}[{\cite[Claim~3.1]{GKS23}}]\label{clm:approximate-cholesky-unbiased}
When \textsc{ApproximateCholesky} is instantiated with an unbiased clique-sampling routine \textsc{CliqueSample} whose output is positive semidefinite, its output
\[
    \mathcal{L} = \textsc{ApproximateCholesky}(L)
\]
satisfies
\[
    \mathbb{E}\!\left[\mathcal{L}\mathcal{L}^{\top}\right] = L.
\]
\end{claim}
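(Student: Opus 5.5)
The plan is a telescoping identity plus a conditional-expectation (martingale) argument over the elimination steps. Throughout, $S_i$ denotes the (random) Schur-complement approximation after $i$ eliminations, and $\mathcal F_{i}$ is the $\sigma$-algebra generated by the permutation $\pi$ and all samples drawn in the first $i$ steps.

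\textbf{Step 1: a deterministic decomposition.} I will show that
\[
  S_{i-1} - S_i = \ell_i\ell_i^{\top} + \bigl(\operatorname{CLIQUE}(\pi(i),S_{i-1}) - Y_i\bigr),
\]
where $Y_i$ is the sampled clique Laplacian returned at step $i$. This follows from the update rule
\[
  S_i = S_{i-1} - \operatorname{STAR}(\pi(i),S_{i-1}) + Y_i
\]
together with the definition
\[
  \operatorname{CLIQUE}(v,S)=\operatorname{STAR}(v,S)-\frac{1}{S(v,v)}S(:,v)S(:,v)^{\top},
\]
since $\ell_i\ell_i^{\top} = \frac{1}{S_{i-1}(v,v)}S_{i-1}(:,v)S_{i-1}(:,v)^{\top}$ for $v=\pi(i)$.

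\textbf{Step 2: telescoping.} Summing the decomposition over $i=1,\dots,n-1$ gives
\[
  L - S_{n-1} = \mathcal L\mathcal L^{\top} + \sum_{i=1}^{n-1} \Delta_i, \qquad \Delta_i := \operatorname{CLIQUE}(\pi(i),S_{i-1}) - Y_i .
\]
The term $\ell_n\ell_n^{\top}$ is zero. I then claim $S_{n-1}=0$. Each $S_i$ is a graph Laplacian, because the sampled trees carry positive conductances. Each elimination removes every edge incident to $\pi(i)$, and the clique is supported on the remaining neighbours. So after $n-1$ eliminations only the single vertex $\pi(n)$ carries edges, and a Laplacian on one vertex is $0$.

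\textbf{Step 3: unbiasedness of each step.} I need $\E{\Delta_i \mid \mathcal F_{i-1}}=0$. Conditioned on $\mathcal F_{i-1}$, the matrix $S_{i-1}$ and the vertex $\pi(i)$ are fixed. The clique $\operatorname{CLIQUE}(\pi(i),S_{i-1})$ is then exactly the product-clique Laplacian $\operatorname{diag}(a)-\frac{1}{W}aa^{\top}$ with $w=a$ the neighbour weights, up to the overall scaling built into the reweighting $W\frac{w(u)w(v)}{w(u)+w(v)}$. \Cref{lem:product-clique-sampling} gives $\E{L_T}=L$ for this clique, so $\E{Y_i\mid\mathcal F_{i-1}} = \operatorname{CLIQUE}(\pi(i),S_{i-1})$. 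The tower rule then gives $\E{\Delta_i}=0$, and taking expectations in Step 2 yields $\E{\mathcal L\mathcal L^{\top}}=L$.

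\textbf{Main obstacle.} The delicate point is the bookkeeping in Step 3. I must check carefully that the input passed to \textsc{WeightedUSTSampling} produces a product clique whose Laplacian equals $\operatorname{CLIQUE}$. Concretely, with $w(x)=a_x=-S_{i-1}(x,v)$ and $W=\sum_x a_x = S_{i-1}(v,v)$, the clique has conductances $a_ua_v/W$. The tree reweighting $W\frac{w(u)w(v)}{w(u)+w(v)}$ is then exactly $c_{uv}/\tau_{uv}$ after normalizing by $W$. I would verify this scaling explicitly, because the algorithm passes $\ell_i$, which is scaled by $1/\sqrt{W}$, rather than $a$. I also need to handle the degenerate cases of degree $0$ or $1$, where the clique is $0$ and nothing is sampled.

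Positive semidefiniteness of $Y_i$ is automatic, since it is a Laplacian with positive conductances. This is what keeps every $S_i$ a Laplacian, which is needed for the induction and for $S_{n-1}=0$.
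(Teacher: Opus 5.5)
The paper gives no argument for this claim; it imports it from GKS23. Your telescoping/martingale proof is the standard argument for it, and it is correct. The one-step identity $S_{i-1}-S_i=\ell_i\ell_i^{\top}+\operatorname{CLIQUE}(\pi(i),S_{i-1})-Y_i$, the vanishing of $S_{n-1}$, and the tower rule are exactly what is needed.

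What differs is scope. You prove the VAC instance: you use Lemma~\ref{lem:product-clique-sampling} for Step 3 and the positivity of tree conductances for Step 2, and the PSD hypothesis plays no role. The claim as stated covers any unbiased \textsc{CliqueSample} with PSD output. For that version, Step 3 is just the unbiasedness hypothesis, and PSD is what supplies the structure your Step 2 relies on. The argument runs by induction:
\begin{itemize}
\item Suppose $S_{i-1}$ annihilates $\mathbf{1}$ and $e_{\pi(j)}$ for $j<i$. Then $\operatorname{CLIQUE}(\pi(i),S_{i-1})$ annihilates these vectors and also $e_{\pi(i)}$, and so does $S_{i-1}-\operatorname{STAR}(\pi(i),S_{i-1})$.
\item For each such $x$, $x^{\top}Y_ix\ge 0$ has conditional mean $x^{\top}\operatorname{CLIQUE}(\pi(i),S_{i-1})\,x=0$. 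So $x^{\top}Y_ix=0$, and hence $Y_ix=0$, almost surely.
\item Therefore $S_i$ annihilates $\mathbf{1},e_{\pi(1)},\dots,e_{\pi(i)}$.
\end{itemize}
At $i=n-1$ these vectors span $\mathbb{R}^n$, so $S_{n-1}=0$. For VAC your shortcut (every $S_i$ is a Laplacian) is of course fine.

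The scaling check you left open also closes cleanly. Feed the neighbour entries of $\ell_i$ in absolute value, so $w(u)=a_u/\sqrt{W}$ and $\sum_u w(u)=\sqrt{W}$. Then $K_w$ has conductances $w(u)w(v)=a_ua_v/W$, exactly those of $\operatorname{diag}(a)-aa^{\top}/W$. The reweighting $\sqrt{W}\,\frac{w(u)w(v)}{w(u)+w(v)}$ equals $\frac{a_ua_v}{a_u+a_v}$, and the marginals are $\frac{a_u+a_v}{W}$. So the expected tree is $\operatorname{CLIQUE}(\pi(i),S_{i-1})$ with no further normalization, because both the clique and the reweighted tree are homogeneous of degree two in $w$.
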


\section{Acknowledgements} 
We thank Anup Rao and David Durfee for helpful discussion and for telling us about a different algorithm for sampling a UST from a product clique. Their algorithm was presented to run in time $O(n^2)$, but can be easily implemented to run in time $O(n \log n)$. 

ChatGPT $5.5$ was used when developing some of the proofs in this article. Codex was used to help with assembling the latex file.

\section{An Independent Work}
A similar elimination rule to VAC, where the dense fill-in clique produced by gaussian elimination is replaced by uniform spanning tree(s), was independently discovered by Chatanya, Musco and Gionis \cite{chaitanya2026cast}.

\bibliographystyle{plain}
\bibliography{refs}

\end{document}